\newcommand{\pmin}{{p_\text{min}}}
\newcommand{\pmax}{{p_\text{max}}}
\newcommand{\pt}{{\pi}} %
\newcommand{\ph}{{\hat{p}}}
\newcommand{\pz}{{p_0}}
\newcommand{\R}{{\mathbb{R}}}
\newcommand{\Rp}{{\mathbb{R}^+}}
\newcommand{\E}{\mathop{\mathbb{E}\/}}  %
\newcommand{\eps}{\epsilon}
\newcommand{\calD}{\mathcal{D}}
\newcommand{\calN}{\mathcal{N}}
\newcommand{\bone}{\boldsymbol{1}}
\def\<{\langle}
\def\>{\rangle}
\newcommand{\emailhref}[1]{\href{mailto:#1}{\tt #1}}
\begin{document}

\title{A Myersonian Framework for Optimal Liquidity Provision in Automated Market Makers}

\author{
 	    \textbf{Jason Milionis} \\
        \small Department of Computer Science \\
 		\small Columbia University \\
 		\small \emailhref{jm@cs.columbia.edu}
 		\and
 		\textbf{Ciamac C. Moallemi} \\
        \small Graduate School of Business \\
 		\small Columbia University \\
 		\small \emailhref{ciamac@gsb.columbia.edu}\\
 		\and
 		\textbf{Tim Roughgarden} \\
        \small Department of Computer Science \\
 		\small Columbia University \\ \small a16z Crypto \\
 		\small \emailhref{tim.roughgarden@gmail.com}
}
\date{Initial version: October 19, 2022 \\
      Current version: November 27, 2023
}
\maketitle

\thispagestyle{empty}

\begin{abstract}
In decentralized finance (``DeFi''), automated market makers (AMMs) enable traders to programmatically exchange one asset for another. Such trades are enabled by the assets deposited by liquidity providers (LPs). The goal of this paper is to characterize and interpret the optimal (i.e., profit-maximizing) strategy of a monopolist liquidity provider, as a function of that LP's beliefs about asset prices and trader behavior. We introduce a general framework for reasoning about AMMs based on a Bayesian-like belief inference framework, where LPs maintain an asset price estimate, which is updated by incorporating traders' price estimates. In this model, the market maker (i.e., LP) chooses a demand curve that specifies the quantity of a risky asset to be held at each dollar price. Traders arrive sequentially and submit a price bid that can be interpreted as their estimate of the risky asset price; the AMM responds to this submitted bid with an allocation of the risky asset to the trader, a payment that the trader must pay, and a revised internal estimate for the true asset price. We define an incentive-compatible (IC) AMM as one in which a trader's optimal strategy is to submit its true estimate of the asset price, and characterize the IC AMMs as those with downward-sloping demand curves and payments defined by a formula familiar from Myerson's optimal auction theory. We generalize Myerson's virtual values, and characterize the profit-maximizing IC AMM. The optimal demand curve generally has a jump that can be interpreted as a ``bid-ask spread,'' which we show is caused by a combination of adverse selection risk (dominant when the degree of information asymmetry is large) and monopoly pricing (dominant when asymmetry is small). This work opens up new research directions into the study of automated exchange mechanisms from the lens of optimal auction theory and iterative belief inference, using tools of theoretical computer science in a novel way.

\end{abstract}

\section{Introduction}

\newcommand{\nume}{num\'eraire\xspace}

\subsection{Exchanges and Market Makers}

The purpose of an exchange is to enable the trade of two or more
assets.  At a stock exchange, for example, shares of a stock might be
exchanged for US dollars via a mechanism known as a ``limit
order book'' (LOB), in which buyers and sellers post ``limit orders''
(offers to buy or sell a specified quantity at a specified price)
that are then matched greedily.

Every trade has two sides: for one user to buy a risky asset at a
given price, there must be a corresponding seller willing to sell at
that same price.  Such trades can occur organically due to a
coincidence of wants from two traders, but in practice many trades are
enabled by professional {\em market makers} who continually match buy
and sell orders whenever the price is right. This paper is about the
profit-maximization problem faced by such a market maker.
In particular, the market maker is continually updating their beliefs around the asset price, according to a Bayesian-like belief inference framework, from a prior belief in conjunction with observations by traders who report price estimates.

Digital assets secured by blockchains, such as cryptocurrencies, are
also traded via exchanges.
In centralized exchanges such as Coinbase or Binance, users hand control of their assets over to the exchange and, therefore, accept the credit risk of not getting them back.
Such exchanges are typically based on the CLOB design.
Decentralized exchanges (DEXs), on the other hand, operate purely programmatically (i.e., ``on-chain''),
are typically non-custodial (meaning that traders at all times have direct control of their assets
in the sense that assets are not entrusted to a third party), and often depart from the CLOB
model.  There are several reasons for this departure: on-chain computation and storage can be
expensive, and CLOBs need a lot of both; the markets for many digital assets (especially long-tail
ones) are thin and have no dedicated market makers, in which case CLOBs may fail to provide an
acceptable level of liquidity; and a general openness in the blockchain world to experimental
designs.  The DEX design space is large, and many different designs have been deployed (on
Ethereum and other general-purpose smart contract platforms) over the past years.

For example, Uniswap (especially in its earliest iterations, v1 and v2) is
a canonical example of an {\em automated market maker (AMM)}.  In an
AMM (unlike in a CLOB), the roles of traders and market makers are
clearly separated.  A market maker, which in this context is called a
{\em liquidity provider (LP)}, deposits into the AMM some amounts of
the assets being traded (and is, by default, passive thereafter).  At
all times, the AMM defines a marginal price as a function of its current
reserves of those assets.  (In the case of Uniswap v1 and v2, the spot
price of an AMM is simply the ratio of the quantities of its two
assets.)  The AMM is willing to take either side of a trade (buy or
sell) at its current marginal price, and in this sense is always providing
liquidity to traders.  The latest iteration of Uniswap (v3) blurs the
line between AMMs and CLOBs by allowing a much wider range of LP
demand curves than in v2 (to some extent approximating what can be
expressed by limit orders in a CLOB), thereby encouraging an LP to
more actively manage its demand curve as market conditions evolve.

The goal of this paper is to characterize the solution to the
following fundamental question: What is the optimal liquidity demand curve for a
profit-maximizing market maker?  This question is relevant for a
market maker in a CLOB, an LP of an expressive AMM such as Uniswap v3,
or even the designer of a new AMM (in which case the market making
liquidity demand curve is defined programmatically rather than submitted by a
third-party LP).  This paper offers a new model for reasoning about
this question and a quite general answer, which resembles in many
respects, and in fact generalizes, Myerson's theory of optimal auctions~\parencite{myerson_optimal_auctions}. To get
a feel for our model and results, we next discuss a highly restricted
but nonetheless illuminating special case, and give a very brief overview of our general theory.

\subsection{The Pure ``Noise Trading'' Case and an Overview of the General Case}
\label{subsec:intro_specific}

The theory of market microstructure (see e.g.,~\textcite{ohara})
differentiates between ``informed traders,'' who have better
information about the fundamental value of an asset than the market maker
does, and ``noise traders,'' who trade for idiosyncratic reasons
that do not reflect any private information about the fundamental value of
the asset.
Generally speaking, market makers lose money to informed traders due
to adverse selection: by virtue of such a trader being willing to
trade, it expects a profit, which, due to the zero-sum nature of
common-value assets, comes at the expense of the market maker.
Market makers can, however, generally profit from noise traders.
In this section, we first consider (as an instructive warm-up) the special case in which {\em all} traders are noise traders (and hence adverse selection plays no role).  Our general model, the results of which are briefly presented in the end of this section, and which is detailed in Section~\ref{sec:bayes} accommodates
arbitrary mixtures of informed and noise traders, and our general
results must therefore reflect adverse selection effects in addition
to the aspects identified in this section.

Fundamentally, a market maker must decide, as a function of the
information available to it, how much of a risky asset they are
willing to buy or sell at different prices.  For this example, we will
assume that the market maker knows the true value~$p_0$ of the risky
asset that it is trading.

An obvious strategy for such a market maker is to accept exactly the
orders on which it makes money: all (and only) the buy orders
with offered price more than~$p_0$, and all (and only) the sell orders
with offered price less than~$p_0$.
However, as we will see, this is not in general the profit-maximizing strategy
for a market maker.

It will be convenient to encode a market maker's strategy via the choice
of a {\em demand curve}~$g$, which maps prices to quantities (i.e., an
amount of risky asset that will be held by the market maker).  The
interpretation is that the market maker would be interested in selling
$g(p_0)-g(p)$ units of the risky asset at the price~$p$.  (Note
that this may be a negative number of units, in which case the market
maker will be buying the risky asset from the trader.)  A choice
of~$p_0$ and~$g$ induces the {\em allocation rule} $x(p) =
g(p_0)-g(p)$, the amount of the risky asset allocated to a trader who
reports a price of~$p$.

Next, we assume that a trader reports a price $\ph \sim \calD$
that is drawn according to a distribution~$\calD$ known to the market
maker.  We assume that the trader is willing to buy or sell at most a
bounded amount of the risky asset (normalized to be~1 only for notational convenience, since there is no difference whatsoever for any arbitrary finite bound on the results), and that the
demand curve~$g$ and price~$p_0$ accordingly satisfy
$g(p_0)-g(\ph) \in [-1,1]$ for all~$\ph$.  When the noise trader reports
their price~$\ph$, they receive an allocation $x(\ph)=g(p_0)-g(\ph)$
of the risky asset and must remit a payment~$y(\ph$) to the AMM for
this provision. As we will see in Section~~\ref{sec:ic_amms}, the
natural notion of incentive-compatibility in this setting --- that a
utility-maximizing trader who values the risky asset at a price~$\ph$
will in fact report~$\ph$ to the AMM --- uniquely pins down the payments
(the $y(\ph)$'s) as a function of the allocation rule (the
$x(\ph)$'s).\footnote{As the reader might guess, it is also essential
  that the allocation rule~$x$ is monotone in~$\ph$, or equivalently
  that the demand curve~$g$ is downward-sloping, with the market maker
  selling more and more of the risky asset as the going price gets
  higher and higher.  See Section~\ref{sec:ic_amms} for details.}

With informed traders, a rational market maker will update its
estimate~$p_0$ of the risky asset's true price following a trade
(generally increasing it after a trader buys the risky asset from the
market maker and decreasing it after a sell).  Our general model in
Section~\ref{sec:bayes} accommodates this through a generic update
rule~$\pi(p_0,\ph)$ that specifies the market maker's new estimate as
a function of its old estimate~$p_0$ and the new information conveyed
by a trader reporting the price~$\ph$.
For example, $\pi$ might
correspond to a Bayesian update with respect to some assumed
information structure, and the update rule may come from and be revised according to a no-regret learning algorithm used by the market maker on successive rounds.
In the simple special case of only noise traders being present, because the behavior
of noise traders is (by assumption) independent of the true value of
the risky asset, the market maker has no cause for updating its
internal price (i.e., $\pi(p_0,\ph)=p_0$ for all $p_0$ and $\ph$).

What is the optimal strategy (i.e., choice of demand curve~$g$) for
the market maker?  That is, which choice of~$g$ maximizes
\begin{equation}\label{eq:noise}
\E[\text{Profit}] = \E[y(\ph) - \pz \cdot x(\ph)],
\end{equation}
where the expectation is with respect to $\ph \sim \calD$,
$x(\ph) = g(\pz)-g(\ph)$ is the allocation rule, and~$y$ is the
uniquely defined payment rule mentioned above?\footnote{Note that, obviously, the choice of the optimal mechanism does not need the knowledge of the reported price by the trader; only its probabilistic characteristics are known according to the distribution $\calD$.} (In the general case,
the ``$p_0"$ term in~\eqref{eq:noise} is replaced by $\pi(\pz,\ph)$,
with the market maker evaluating its portfolio according to its new
belief about the value of the risky asset, conditioned on observing the trader's report of $\ph$.)

The results in this paper imply that the answer to this question is
always defined by a posted-price-like mechanism with two prices (depending on $\calD$), $p_l \le p_0$ and
$p_h \ge p_0$.  These two prices split the price range into three
intervals, and the optimal strategy is to always buy the maximum
amount in the lowest price interval, always sell in the highest price
interval, and refuse to trade in the middle price interval.

We will be interested in the length of this middle price interval,
which we call the {\em no-trade gap}.  Bigger no-trade gaps are
generally worse from a welfare perspective, with welfare-increasing
trades sacrificed in the name of higher profit to the market maker.
The no-trade gap is reminiscent of the ``bid-ask spread'' in CLOBs,
which occupies a central position in the study of traditional
market-making in CLOBs.\footnote{In a CLOB, the ``bid'' and ``ask''
  refer to the highest price of an outstanding buy order and the
  lowest price of an outstanding sell order, respectively.  Because
  limit orders in a CLOB are always matched greedily, the bid-ask
  spread is always non-negative.  Small spreads are generally viewed as
  a good thing, indicating a ``highly liquid'' market (with lower trading costs, at least for small quantities).}
Traditionally, theoretical justifications for non-trivial bid-ask
spreads in CLOBs have relied on frictions (like transaction fees or
inventory costs) or adverse selection (with a market maker needing to
exploit noise traders to cover the losses to informed traders).
In the current example, with neither frictions nor adverse selection,
the no-trade gap arises for a different reason, namely the monopolist
position of the market-maker.  In our general model, with the
possibility of informed traders, monopoly pricing and adverse
selection both contribute to the no-trade gap.  For example, no matter
what the distribution $\calD$ is, if all traders have perfect information
(i.e., extreme adverse selection), then the no-trade gap encompasses
the full price range.

The prices~$p_l$ and~$p_h$ that split the price range into three
intervals, and hence the no-trade gap, are distribution-dependent.
For example, intuitively, one might expect a smaller gap for
distributions~$\calD$ that are tightly concentrated around~$p_0$.
In general, our analysis shows that these two prices can be
characterized as roots of two functions that resemble the virtual
valuation functions used in optimal auction theory~\parencite{myerson_optimal_auctions}.

We next give two example cases for pure noise-trading: if $\calD$ is the uniform
distribution on $[\pmin,\pmax]$, then $p_l = (\pmin+\pz)/2$ and
$p_h = (\pz+\pmax)/2$; the resulting optimal allocation rule is depicted in \Cref{subfig:uniform}.
If $\calD$ is the exponential distribution with parameter $\lambda$ (on $[0,\infty)$), then~$p_h = p_0 + \tfrac{1}{\lambda}$.
The lower price~$p_l$ does not have a closed-form expression in this case, but can be easily approximated numerically (given $\pz$ and $\lambda$); \Cref{fig:table_exponential} enumerates some examples of how these lower prices vary, and an illustration of the optimal allocation rule is contained in \Cref{subfig:exp}.

\begin{figure}[h]
\centering
\captionsetup[subfigure]{labelformat=empty}
\hspace*{\fill}
\begin{subfloat}[\label{subfig:uniform}]
\centering
\begin{tikzpicture}
\begin{axis}[
    clip=true,scale=1.5,
    axis y line=left,
    axis x line=middle,
    xtick={1.1, 1.55, 2.0},
    xticklabels={$p_0$, $\frac{p_\text{max}+p_0}{2}$, $p_\text{max}$},
    extra x ticks={0.2, 0.65},
    extra x tick labels={$p_\text{min}$, $\frac{p_\text{min}+p_0}{2}$},
    extra x tick style={
        xticklabel style={
            anchor=south,
        }
    },
    ytick={-1, 0, 1},
    yticklabels={-1, 0, 1},
    xlabel=$\hat{p}$,
    xmin=0, xmax=2.2,
    ymin=-1.5, ymax=1.5,
    footnotesize,
    ];
    \addplot[mark=none,blue,thick,domain=0.2:0.65]{-1};
    \addplot[mark=none,blue,thick] coordinates {(0.65, -1) (0.65, 0)};
    \addplot[mark=none,blue,thick,domain=0.65:1.55]{0};
    \addplot[mark=none,blue,thick] coordinates {(1.55, 0) (1.55, 1)};
    \addplot[mark=none,blue,thick,domain=1.55:2.0]{1} node[midway,above] {$x_a(\hat{p})$};
    \addplot[mark=none,dashed] coordinates {(0.2, -1) (0.2, 0)};
    \addplot[mark=none,dashed] coordinates {(2, 0) (2, 1)};
\end{axis}
\end{tikzpicture}
\end{subfloat}
\hfill
\begin{subfloat}[\label{subfig:exp}]
\centering
\begin{tikzpicture}
\begin{axis}[
    clip=true,scale=1.5,
    axis y line=left,
    axis x line=middle,
    xtick={1, 1.5},
    xticklabels={$p_0$, $p_0 + \frac{1}{\lambda}$},
    extra x ticks={0.396},
    extra x tick labels={$p_l$},
    extra x tick style={
        xticklabel style={
            anchor=south,
        }
    },
    ytick={-1, 0, 1},
    yticklabels={-1, 0, 1},
    xlabel=$\hat{p}$,
    xmin=0, xmax=2.2,
    ymin=-1.5, ymax=1.5,
    footnotesize,
    ];
    \addplot[mark=none,blue,thick,domain=0:0.396]{-1};
    \addplot[mark=none,blue,thick] coordinates {(0.396, -1) (0.396, 0)};
    \addplot[mark=none,blue,thick,domain=0.396:1.5]{0};
    \addplot[mark=none,blue,thick] coordinates {(1.5, 0) (1.5, 1)};
    \addplot[mark=none,blue,thick,domain=1.5:2.2]{1} node[midway,above] {$x_b(\hat{p})$};
\end{axis}
\end{tikzpicture}
\end{subfloat}
\hspace*{\fill}
\caption{Optimal allocation rule $x(\hat{p})$ in ``pure noise trading'' showing the no-trade gap around $p_0$ when the distribution $\hat{p}\sim\mathcal{D}$ is: (a) the uniform distribution on $[p_\text{min}, p_\text{max}]$, (b) the exponential distribution on $[0, \infty)$ with parameter $\lambda=2$ for $p_0=1$ (the computation of $p_l$ can only be done via numerical methods; see \Cref{fig:table_exponential}).}
\label{fig:ex}
\end{figure}

\begin{figure}[h]
\centering
\begin{tabular}{|c|c|c|c|}
\hline
$p_0$ \textbackslash{} $\lambda$ & 0.5   & 1     & 2     \\ \hline
0.25                             & 0.123 & 0.121 & 0.118 \\ \hline
0.5                              & 0.242 & 0.235 & 0.221 \\ \hline
0.75                             & 0.358 & 0.342 & 0.314 \\ \hline
1                                & 0.470 & 0.443 & 0.396 \\ \hline
1.5                              & 0.684 & 0.627 & 0.537 \\ \hline
2                                & 0.886 & 0.792 & 0.653 \\ \hline
\end{tabular}
\caption{Table of lower prices $p_l$ in ``pure noise trading'' when the distribution $\hat{p}\sim\mathcal{D}$ is the exponential distribution on $[0, \infty)$ with parameter $\lambda$.}
\label{fig:table_exponential}
\end{figure}

The rest of this paper develops a similarly precise understanding of
optimal market-maker strategies in a more general setting.
We will now briefly present here the general results in an informal way.
For the details, the reader is referred to \Cref{sec:bayes}.

\begin{inftheorem}
There exist virtual valuation functions $\phi_u(s)$ and $\phi_l(s)$, such that the expected profit, shown in \Cref{eq:noise} in the general case where
the ``$p_0"$ term is replaced by $\pi(\pz,\ph)$, is equal to the expected virtual welfare of an auction with virtual values derived by $\phi_u(s)$ if $s\ge\pz$ and $\phi_l(s)$ if $s < \pz$.
\end{inftheorem}

\begin{inftheorem}
There exist two prices $p_l \le \pz$ and $p_h \ge \pz$, such that the optimal allocation rule of a market maker that maximizes their expected profit is to buy the maximum amount if the trader reports a price $\ph \le p_l$ and sell the maximum amount if the trader reports a price $\ph \ge p_h$, and refuse to trade in the interval between these two prices, i.e., when $\ph \in (p_l, p_h)$.
\end{inftheorem}

The formal statements of the above results are \Cref{thm:exp_virtual_welfare} and \Cref{thm:opt_alloc_rule} respectively.
Finally, in \Cref{sec:lin_lambda}, we examine how our general theory above applies in the interesting special case of a linear update rule.
Through this special case, we naturally show that the structure of the no-trade gap captures precisely the complex effect that is the interpolation between the effects of adverse selection (observed when all traders have perfect information) and monopoly pricing (observed when all traders are noise traders).
For further details, we refer the interested reader to \Cref{sec:lin_lambda}.

\subsection{Related Work}
\label{subsec:litrev}

\paragraph{Auctions.}
This paper critically relates to revenue maximizing auction design, initiated under the pioneering
work of \textcite{myerson_optimal_auctions}.  Since then, various versions of virtual values have
been used in a variety of settings among others for specifying optimal auctions and their
approximations, including as to how they relate to no-regret learning in auctions
(e.g.,~\textcite{bulowroberts,chawla2007algorithmic,hartline2012approximation,haghpanah2015reverse,CHAWLA201980,chawla2010multi,alaei2013simple,roughgarden2019approximately,roughgarden2016optimal,hartline2009simple});
in a similar spirit, we define and use functions that can be interpreted as virtual valuations.  For
a survey of results around revenue-optimal mechanisms, virtual values, optimal auctions, and their
approximations, see the book by \textcite{hartlinebook}.

\paragraph{AMMs.}
Constant function market makers (CFMMs) have arisen from the idea of holding some
function constant across ``trading states'' (from states of the world, previously, on the
utility-based framework of AMMs for prediction markets by \textcite{chenpennock2007}), and are characterized by such an
invariant called a \textit{bonding function}.
This means that the AMM is willing to take the other side of any trade that corresponds to remaining on a constant level curve of the particular bonding function chosen.
Tools from convex analysis ---familiar to both the algorithmic game theory and machine learning theory communities--- are crucially used to analyze CFMMs and prove ``optimal behavior'' properties for participating traders \parencite{jasonLVR_CCSDeFi,jason_arb_profits,angeris2020improved,angeris2021replicatingmarketmakers,angeris2021replicatingmonotonicpayoffs}.

The description of the framework that we use for AMMs --- inspired by a reparameterization of a
CFMM curve (established by \textcite{angeris2020improved}) in terms of portfolio holdings of the pool
with respect to the price --- resembles that in \textcite{exchange_complexity}, who used a
similar AMM framework for a different purpose, namely
unifying constant function market makers
(CFMMs) and limit order books (LOBs) and 
complexity-approximation trade-offs between them. 
The latter work considers neither incentive-compatibility constraints nor
the problem of optimal liquidity provision.

\paragraph{Asset price beliefs in AMMs.}
\textcite{mazieres} also examine beliefs of LPs (specifically on CFMMs)
around future asset prices; however, in their model, these beliefs are
static (i.e., similar to a single prior that remains unchanged
throughout the trades' occurrence) or modeled by a fixed price dynamic
discounted to the present (i.e., still not dynamically revised in
response to incoming trades).  The focus of \textcite{mazieres} is the design of
CFMMs that maximize the fraction of trades
that the CFMM can satisfy with small slippage under the stationary
distribution of the Markov chain induced by their model. 
Our work focuses instead on LP profit-maximization with
informed traders and noise traders, showing a formal relation between the optimal auction design literature, including computational considerations in approximations of those, and the market microstructure literature.

There is also work specific to Uniswap v3 from the LP perspective around how beliefs about future prices should guide the choice of an LP's demand curve \parencite{fan2022differential,yin2021liquidity,neuder2021strategic}.

\paragraph{Market microstructure.}
There is a large literature on the market microstructure of limit order books; see the textbook by \textcite{ohara} and references therein.
Closest to the present work is the paper of \textcite{glosten1989insider}. Focusing on the role of asymmetric information, \textcite{glosten1989insider} considers the problem of a monopolist optimizing their liquidity demand curve, but in the presence of risk averse traders (in contrast to our setting, where the traders can be viewed as risk neutral).
The risk averse setting is more technically challenging, and hence \textcite{glosten1989insider} is required to make very strong and highly specific technical assumptions (normally distributed values and observation noise) that are not necessary, obscure the general intuition in our setting, and do not add value.
Instead, the intuition provided by our model is much stronger and more general.
Moreover, our risk neutrality assumption unlocks a fruitful connection between optimal market-making and the Myersonian theory of optimal auction design.
\textcite{glosten1994electronic} considers the related problem of the liquidity demand curve in a competitive equilibrium.
The latter paper is restricted to a limit order book, and does not consider more general forms of liquidity provision in exchange mechanisms, like the ones we do.
\textcite{glostenmilgrom1985} operate in a competitive regime with zero expected profits, which gives a completely different reason for bid-ask spreads to arise than the one given in our work.
\textcite{kyle1985} develops a model where a \emph{single} informed trader (``insider'') competes with noise traders and market makers, showing the incorporation of insider information into the equilibrium price. In this work, the focus is on the informed trader who is a monopolist (there is only one), whereas in our work informed traders are unrestricted and the market maker is a monopolist in their liquidity provision.
\textcite{kyle1989} continues to examine the informational efficiency of equilibrium prices in a slightly more general setting, where there is competition among informed traders, showing that prices are less informative than in the competitive equilibrium.
Both of these papers use strong assumptions, predominantly the normal distribution over both the values and the order sizes.
While their models are related to ours in their considerations of types of participants present and observations in the optimal case, the present results were not known in either of these two works; neither of these models is as general as ours, neither relates whatsoever to mechanism design and incentive compatibility constraints, and both are based on very strong technical assumptions that our work does not need.

\paragraph{Sequential market-making.}
Finally, the framework we describe is similar to sequential market-making procedures in its use of iterative belief updating.
Such procedures have been studied in the past including in the context of online learning frameworks \parencite{nips2013_abernethy_ol_mm,bayesian_mm_EC2012,nips2008sequential_marketmaking,das2005}.
Closer to our Bayesian setting, \textcite{nips2008sequential_marketmaking} consider the liquidity present across trade sizes in a multi-period monopolistic market making model (and subsequently shows the superiority of this compared to a competitive multi-dealer market).
The above works, however, do not show a connection between optimal market-making and the Myersonian theory of optimal auction design, which is the key focus in our results.

\section{Automated market makers}
\label{sec:ic_amms}

We begin by presenting a general framework of exchange for automated market makers (AMMs) that allows us to consider market making as conducting an auction, and is based on market participants (traders) trading with the AMM on the basis of future price beliefs.

More specifically, suppose there are two assets, a risky asset and a \nume asset. An AMM according
to this framework is defined by a \textbf{demand curve} $g:(0,\infty)\to\Rp$ such that the
function $g(\cdot)$ represents the \emph{current} demand of the market maker for the risky asset
as a function of price, i.e., the amount of risky asset that the market maker wishes to hold at each possible price,
along with a payment rule $y:(0,\infty)\to\R$ such that the function $y(\cdot)$ specifies the quantity of \nume that a trader needs to pay depending on the \emph{current} AMM price $p_0$ (which instantiates a specific state of this AMM's demand curve, i.e., a current demand of the AMM for $g(\pz)$ of risky asset) and what they declare to the AMM that their price estimate is (see the following paragraph on trading for the precise definition).
The market maker is assumed to have no inventory costs or budget constraints.

The demand curve $g$ along with an appropriately defined payment rule can arise through \emph{bonding curves} of traditional constant function market makers (CFMMs, i.e., functions $f$ such that the holdings of the joint pool $(x,y)$ satisfy $f(x,y)=c$ for some $c$) but this is not necessary: the exchange mechanisms defined by this framework strictly generalize CFMMs (see also \textcite{exchange_complexity}).
Additionally, a non-continuous demand curve $g$ can also be used to express limit orders, with the discontinuities representing limit buy or limit sell orders at the corresponding price, of quantity of risky asset equal to the jump at that price \parencite{exchange_complexity}.

\paragraph{Trading} A trader who wants to trade with the AMM has some (hidden) belief about the asset price. The trader will come to the mechanism and specify a price bid $\ph$. Based on their bid, the trader will get a quantity $g(\pz) - g(\ph)$ of risky asset.
Finally, the AMM revises its current price (which was previously $\pz$) according to some update rule (we will see more about the update rule from \Cref{sec:bayes} onwards; it will play no role in this section).
We say that the trader was \textit{allocated} a quantity $x(\ph) \triangleq g(\pz) - g(\ph)$ of asset.
This interpretation makes it natural to consider the transaction with the trader \textbf{as an auction} with one bidder; the trader has a price estimate (valuation) for the risky asset, will be allocated a portion of the risky asset (provided by the market maker) based on their submitted price bid, and will need to pay the market maker an amount $y(\ph)$ of \nume for it. Notice that it may be the case that $x(\ph) < 0$, which means that the trader will sell risky asset to the AMM, and the market maker will now need to come up with the corresponding payment to the trader for it.

\paragraph{Example: Uniswap v2} Let's consider the simple case of a constant product market maker (CPMM), such as Uniswap v2, to show how it emerges from the framework above. Consider the demand curve of the AMM to be of the form
$
g(p)=\frac{c}{\sqrt{p}}
\,,
$
for some $c > 0$, and the payment rule to be of the form
$
y(\ph)= c \left( \sqrt{\ph} - \sqrt{\pz} \right)
\,,
$
for the same $c > 0$. A trader who will trade with this exchange at a current price $p_0$ with a price bid $\ph$ will obtain a quantity $x(\ph) = g(p_0) - g(\ph) = c\left( \frac{1}{\sqrt{p_0}} - \frac{1}{\sqrt{\ph}} \right)$ of risky asset.
Since the curve $g(p)$ above is just a reparameterization (in terms of prices) of the risky asset
holdings of the CPMM curve\footnote{Note that here $x$ and $y$ --- without the parentheses --- are
  not the functions above, but rather the current total reserves of the AMM pool.} $xy=k$ where $k=c^2$
\parencite{angeris2021replicatingmonotonicpayoffs},
by comparing the expression of the quantity of risky asset that the trader gets, we conclude that they get the same quantity as in the CPMM.

\subsection{Incentive-compatible AMMs}

We will now consider that traders interacting with AMMs have utility functions that emerge naturally from their \emph{true} belief about the asset price versus what they bid to obtain a quantity of risky asset from the AMM. In particular, a trader who has a \emph{true} belief of $p$ for the asset's price, and submits a bid $\ph$ to the AMM seeks to optimize (maximize) their utility defined by
\begin{align}
\label{def:trader_util}
u(p, \ph) \triangleq p\cdot x(\ph) - y(\ph)
\,,
\end{align}
because they consider that the asset's true value to them is $p$ per unit, they obtain a quantity $x(\ph)$ of it, and pay $y(\ph)$ for this trade.

\begin{definition}
\label{def:ic_amm}
[Incentive-compatible AMM]
An AMM defined according to a demand curve $g(p)$ and a payment rule $y(\ph)$ is called \emph{incentive compatible} (IC), if any trader whose utility follows \Cref{def:trader_util} has as an optimal strategy for interacting with the AMM to submit their \emph{true} belief about the asset price as their bid to the AMM.
\end{definition}

Is an AMM defined by an arbitrary demand curve $g(p)$ and an arbitrary payment rule $y(\ph)$ incentive compatible? The answer is no, and the characterization of IC AMMs follows standard results of dominant-strategy incentive-compatible (DSIC) single-parameter auctions.

\begin{proposition}
[Characterization of IC AMMs]
\label{prop:ic_amms_non_incr_g}
An AMM defined according to a demand curve $g(p)$ can be paired with a payment rule $y(\ph)$ to obtain an IC AMM, if and only if $g(p)$ is a non-increasing function.
\end{proposition}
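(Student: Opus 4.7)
The plan is to recognize that the interaction between a trader and the AMM is a single-parameter mechanism design problem, with the trader's type being their true price belief $p$, the allocation rule being $x(\ph) = g(\pz) - g(\ph)$, and the payment being $y(\ph)$. Since the trader's utility $u(p,\ph) = p\cdot x(\ph) - y(\ph)$ is quasi-linear in money and affine in the type $p$, Myerson's characterization of dominant-strategy incentive-compatible single-parameter mechanisms applies directly. The only (mild) twist relative to a textbook auction is that the allocation $x(\ph)$ can take negative values (the trader sells to the AMM), but this does not affect the argument because $u$ is still affine in $p$.

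For the ``only if'' direction, suppose $(g,y)$ yields an IC AMM. I use the standard Myerson swap argument: for any $p_1 < p_2$, incentive compatibility at these two types gives $u(p_1,p_1) \geq u(p_1,p_2)$ and $u(p_2,p_2) \geq u(p_2,p_1)$; expanding using the definition of $u$ and adding these two inequalities cancels the payment terms and yields $(p_2 - p_1)\bigl(x(p_2) - x(p_1)\bigr) \geq 0$. Hence $x(\ph)$ is non-decreasing in $\ph$, and since $\pz$ is fixed, this is equivalent to $g$ being non-increasing.

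For the ``if'' direction, suppose $g$ is non-increasing, so $x(\ph) = g(\pz) - g(\ph)$ is non-decreasing. Define the payment rule by the Myersonian formula
\[
y(\ph) \;=\; \ph \cdot x(\ph) \;-\; \int_{\pz}^{\ph} x(s)\, ds ,
\]
normalized so that $y(\pz) = 0$ (no trade, no payment). A direct algebraic calculation then gives $u(p,p) - u(p,\ph) = \int_{\ph}^{p} \bigl( x(s) - x(\ph) \bigr)\, ds$, which is non-negative by monotonicity of $x$ (the integrand has the same sign as $s - \ph$, which has the same sign as the direction of integration). Equivalently, in the smooth case the derivative $\partial u / \partial \ph = (p - \ph)\,x'(\ph)$ vanishes at $\ph = p$ and has the correct sign on either side, so reporting $\ph = p$ is optimal.

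There is no substantive obstacle; the main content is the standard Myerson swap/envelope argument. The only point worth verifying carefully is that the formula and the IC inequality above treat buys ($\ph > \pz$, with $x(\ph) > 0$ and typically $y(\ph) > 0$) and sells ($\ph < \pz$, with $x(\ph) < 0$ and the integral running in reverse, so that $y(\ph) < 0$ and the AMM pays the trader) symmetrically --- which they do, because the monotonicity of $x$ makes the integrand in the IC identity non-negative in both regimes.
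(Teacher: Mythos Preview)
Your proposal is correct and follows essentially the same approach as the paper: both reduce to the Myerson single-parameter characterization, and both establish necessity of monotonicity via the standard swap argument $(p-\ph)\bigl(x(p)-x(\ph)\bigr)\ge 0$. If anything, you are slightly more explicit than the paper on the ``if'' direction, writing out the payment formula and the envelope identity $u(p,p)-u(p,\ph)=\int_{\ph}^{p}(x(s)-x(\ph))\,ds$, whereas the paper simply invokes the DSIC characterization and defers the payment construction to the subsequent corollary.
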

\begin{proof}
First, we observe the analogy drawn to \textit{allocations} of asset for trades: $x(\ph) = g(\pz) - g(\ph)$.
Since the utility function for the auction with a single bidder (the trader) is of the form of \Cref{def:trader_util}, the proposition follows from the characterization of dominant-strategy incentive-compatible (DSIC) single-parameter auctions' allocation rules $x(\ph)$ as only those functions which are monotone, and specifically non-decreasing functions, because $x(\ph)$ is non-decreasing precisely if and only if $g(\ph)$ is non-increasing.

More specifically, we follow the standard analysis to showcase the exact correspondence of the settings but do not show the entire analysis, as this follows Myerson's. In order for the optimal strategy of any trader with true belief $p$ to be to submit $p$ to the AMM, we need to have that for any $p,\ph$:
\[
u(p,p) \ge u(p,\ph)
\Leftrightarrow
p\cdot x(p) - y(p) \ge p\cdot x(\ph) - y(\ph)
\Leftrightarrow
p\cdot\left[ x(p)-x(\ph) \right] \ge y(p) - y(\ph)
\,.
\]
Additionally, if the converse holds, and the true belief of the trader was $\ph$ while it is considering submitting $p$ to the AMM, then we also need to have that for any $p,\ph$:
\[
u(\ph,\ph) \ge u(\ph, p)
\Leftrightarrow
\ph\cdot x(\ph) - y(\ph) \ge \ph\cdot x(p) - y(p)
\Leftrightarrow
\ph\cdot\left[ x(p) - x(\ph) \right] \le y(p) - y(\ph)
\,.
\]
Combining these two inequalities, we get that for any $p,\ph$:
\[
\ph\cdot\left[ x(p) - x(\ph) \right] \le y(p) - y(\ph) \le p\cdot\left[ x(p)-x(\ph) \right]
\,,
\]
hence it is immediate (from the derived inequality $(p-\ph) \cdot\left[ x(p) - x(\ph) \right] \ge 0$) that $x(\ph)$ needs to be non-decreasing.
\end{proof}

The next question to be answered is whether arbitrary payment rules $y(\ph)$ are allowed to be requested by an IC AMM.
Not surprisingly, as is the case for payments defined by an allocation rule of a DSIC mechanism, the payments need to be precisely defined by the demand curve for an IC AMM. The below formula is familiar from Myerson's single-parameter DSIC auction design.

\begin{corollary}
[Payments of traders on IC AMMs]
Suppose that any trader who is allocated a quantity $0$ of risky asset pays $0$ amount of \nume to an IC AMM according to \Cref{def:ic_amm} with a demand curve $g(p)$. The IC AMM's payment rule $y(\ph)$ for a trader submitting a price bid $\ph$ needs to be exactly:
\begin{equation}
\label{eq:ic_amm_payment}
y(\ph)
=
\int_\pz^\ph s ~dx(s)
=
-\int_\pz^\ph p ~dg(p)
\,.
\end{equation}
\end{corollary}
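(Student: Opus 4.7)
The plan is to apply the standard Myerson payment-identity argument, adapted to the AMM setting and expressed in Stieltjes form so that jumps in $g$ (corresponding to limit orders) are handled without extra effort. The starting point is the two-sided sandwich inequality that was already derived at the end of the proof of \Cref{prop:ic_amms_non_incr_g}: for every pair of reports $p, \ph$,
\[
\ph \cdot \left[ x(p) - x(\ph) \right] \;\le\; y(p) - y(\ph) \;\le\; p \cdot \left[ x(p) - x(\ph) \right].
\]
Since, by \Cref{prop:ic_amms_non_incr_g}, $x$ is non-decreasing in the report, this sandwich pins down $y$ up to an additive constant; the normalization ``a trader allocated $0$ pays $0$'' fixes the constant. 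Concretely, at the report $\ph = \pz$ we have $x(\pz) = g(\pz) - g(\pz) = 0$, so the normalization gives $y(\pz) = 0$.

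To produce the integral formula, I would first consider any two reports $\pz \le a < b$ and specialize the sandwich to $p = b$, $\ph = a$ and to $p = a$, $\ph = b$, obtaining
\[
a \cdot [x(b) - x(a)] \;\le\; y(b) - y(a) \;\le\; b \cdot [x(b) - x(a)].
\]
Refining any partition $\pz = s_0 < s_1 < \cdots < s_n = \ph$ and summing these bounds across consecutive grid points shows that the telescoping sum $y(\ph) - y(\pz)$ is squeezed between the lower and upper Riemann--Stieltjes sums of $s \mapsto s$ against the monotone integrator $x(s)$. Since $x$ is monotone (hence of bounded variation), these sums have a common limit equal to $\int_\pz^\ph s\,dx(s)$; the symmetric argument on the interval $[\ph,\pz]$ handles the case $\ph < \pz$, producing the opposite sign from reversing the orientation in a way consistent with the formula. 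This yields
\[
y(\ph) \;=\; \int_{\pz}^{\ph} s \, dx(s).
\]
The second equality in \eqref{eq:ic_amm_payment} is then a one-line change of integrator: $x(s) = g(\pz) - g(s)$ implies $dx(s) = -dg(s)$, so $\int_\pz^\ph s\,dx(s) = -\int_\pz^\ph p\,dg(p)$.

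The only nontrivial obstacle I anticipate is justifying the passage from the pointwise sandwich inequalities to the Stieltjes integral when $g$ (and hence $x$) is merely monotone, possibly with jumps or flat pieces; this is where the usual ``Myerson says $y'(\ph) = \ph\,x'(\ph)$'' differentiation argument does not literally apply. Working with upper and lower Riemann--Stieltjes sums against the monotone integrator $x$ and invoking monotonicity of $x$ to guarantee integrability of the continuous function $s \mapsto s$ with respect to $dx$ resolves this cleanly. All other steps are routine.
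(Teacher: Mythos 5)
Your proposal is correct, and it reaches the same destination as the paper by a more self-contained route. The paper's proof is essentially a citation: it invokes the known uniqueness (up to an additive constant) of the payment rule in Myerson's single-parameter theory, fixes the constant via the normalization $y(\pz)=0$ (using $x(\pz)=0$), and then translates $\int_\pz^\ph s\,dx(s)$ into $-\int_\pz^\ph p\,dg(p)$ via $x(s)=g(\pz)-g(s)$. You instead re-derive the payment identity from first principles: starting from the two-sided IC sandwich already obtained in the proof of \Cref{prop:ic_amms_non_incr_g}, you telescope it over a partition of $[\pz,\ph]$ and squeeze $y(\ph)-y(\pz)$ between the left- and right-endpoint Riemann--Stieltjes sums of $s$ against the monotone integrator $x$; since the integrand is continuous and $x$ is of bounded variation, both sums converge to $\int_\pz^\ph s\,dx(s)$ (their gap is at most the mesh times $x(\ph)-x(\pz)$), which pins down $y$ exactly after the same normalization, with the symmetric argument covering $\ph<\pz$. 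What your route buys is rigor precisely where the paper is terse: it handles demand curves $g$ with jumps or flat pieces (the limit-order case the paper explicitly allows) without any differentiability assumption, whereas the paper's route buys brevity by outsourcing the identity to Myerson. The final change of integrator $dx=-dg$ is identical in both.
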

The above integrals are Riemann--Stieltjes integrals. In cases where $g(p)$ is differentiable, the differential takes the form $dg(p) = g'(p)~ dp$.
\begin{proof}
From Myerson's single-parameter auction design, we know that there is a unique payment rule (up to an additive constant) for DSIC single-parameter auctions; that is,
$
y(\ph)
=
\int_\pz^\ph s ~dx(s)
\,,
$
where the arbitrary additive constant is no longer present, as we have integrated the required normalization that for an allocation of $x(\pz) = g(\pz)-g(\pz)=0$ asset, the payment needs to be $y(\pz)=0$.
This directly translates to the formula given in the corollary's statement in terms of the demand curve $g(p)$ of the AMM.
\end{proof}

Note that the integral that defines the payments is non-negative if $\ph \ge \pz$, and non-positive if $\ph \le \pz$. This is compatible with the direction of the trade, as defined by the allocation rule and \Cref{prop:ic_amms_non_incr_g}, which is that $x(\ph) \ge 0$ for $\ph \ge \pz$, and $x(\ph) \le 0$ for $\ph \le \pz$.

\section{Optimal liquidity provision via revenue maximization}
\label{sec:bayes}

\subsection{Bayesian model for optimal allocation rule}
\label{subsec:model_optimal_alloc}

First, we describe our model for how the market maker should optimize their allocation rule for the asset.
From now on, the market maker will be assumed to be risk-neutral, i.e., be indifferent to ``risk,'' and only care to be profit-maximizing.

The Bayesian belief framework operates as follows: the market maker has a prior belief distribution on the asset price, and the (single) price it decides to use is $\pz$; for example, and most commonly, $\pz$ will be the mean of this prior distribution. A trader then comes to the AMM with an estimate $\ph$ (that is the true belief of the trader, if the AMM is IC according to \Cref{def:ic_amm}).
Traders are on the market to buy or sell up to 1 unit of the asset;\footnote{This is just for notational convenience. In fact, it generalizes easily for demand up to any arbitrary finite bound, and a subsequent normalization enables using exactly our results here with unit demand.} that is, the allocation rule should satisfy $-1\le x(\ph) \le 1$ for all $\ph$.
The trader's estimate $\ph$ is perceived by the market maker as one observation of an asset price estimate coming from a distribution $\calD$ that is included in the market maker's model.
For this distribution $\calD$, we assume that it has a continuous, positive density function $f(\ph)$ on a compact support $[\pmin,\pmax]$ and a cumulative density function (CDF) $F(\ph)=\int_\pmin^\ph f(s)~ds.$\footnote{Our results can be extended to non-continuous density functions supported at possibly non-compact intervals, e.g., $(0,\infty)$, but for simplicity of exposition, we do not focus on handling these edge cases here, since they do not add to the general discussion points of the paper.}

Applying Bayes' theorem allows the market maker to specify the posterior distribution of the asset price, conditional on the observation from the trader.
The market maker then sets this posterior distribution as their new prior for future observations (trades).
We will assume that the way this posterior is used by the AMM is through valuing the quantity of asset given or obtained at a price of $\pt(\pz,\ph)$ which can be, for instance, the mean of the posterior distribution according to the Bayesian framework.
(A very interesting special case as to how a linear dependence arises on the mean of the posterior with normally distributed prior and observational error is shown in \Cref{sec:lin_lambda}.)

In general, we will call $\pt(\pz,\ph)$ the ``update rule'' and --- for generality --- we will not impose further constraints arising from the Bayesian framework on what it may be other than that:

\begin{assumption}
\label{ass:bayesian}
The following conditions have to hold for the update rule:
\begin{itemize}
\item $\pt(\pz, \ph)$ lies in the interval between $\pz$ and $\ph$.%
\item $\pt(\pz, \ph)$ is non-decreasing in $\ph$.
\item $\pt(\pz, \pz) = \pz$, i.e., if the trader's estimate is the same as the market maker's prior, then the updated valuation of the market maker for the asset should be the same as their prior.
\item $\E_{\ph\sim\calD}[\pt(\pz, \ph)] = \pz$, i.e., the prior is consistent with the update rule.
\end{itemize}
\end{assumption}

Therefore, according to the above, the market maker's next best estimate for the asset's price is $\pt(\pz,\ph)$ conditioned on receiving the trader's estimate $\ph$.
It is then natural to consider that, when selling (respectively, buying) some quantity $x(\ph)$ of the asset from the trader, the AMM will not only consider the amount of money that they receive (respectively, give) but also their perceived best estimate of the value of the asset that they lost (respectively, gained).
This, then, arises as an interesting contrast to traditional auction theory, where the revenue of the auction is only the payment that is made.
Here, according to the above reasoning, it makes sense for the AMM to value their profit according to
\begin{align}
\label{eq:revenue}
\text{Profit} = y(\ph) - \pt(\pz, \ph) \cdot x(\ph) \, .
\end{align}
Therefore, an AMM choosing an optimal strategy to provide liquidity will consider the strategy that maximizes their profit, among all feasible, incentive-compatible (for the traders) strategies.

\subsection{Equivalence of expected profit and virtual welfare, and virtual value functions}

By re-working Myerson's optimal auction theory under the new objective that we want to optimize and the constraints that come with the definition of the allocation rule from an AMM, we get that:

\begin{theorem}
\label{thm:exp_virtual_welfare}
[Expected Profit = Expected Virtual Welfare]
The expected profit according to \Cref{eq:revenue} of a market maker under the assumptions of \Cref{subsec:model_optimal_alloc} is equal to the expected virtual welfare, i.e.,
\[
\E_{\ph\sim\calD}[\text{Profit}] = \E_{\ph\sim\calD} \left[ |x(\ph)| \cdot \left( \phi_u(\ph) \cdot \bone_{\ph\ge\pz} + \phi_l(\ph) \cdot \bone_{\ph\le\pz} \right) \right]
\,
\]
where $\bone_{A}$ is the indicator function that is 1 when $A$ holds and 0 otherwise, and $\phi_u(s), \phi_l(s)$ are the virtual value functions defined by
\begin{align}
\label{eq:upper_virtual_value}
\phi_u(s) \triangleq s - \frac{1-F(s)}{f(s)} -\pt(\pz,s)
\end{align}
and
\begin{align}
\label{eq:lower_virtual_value}
\phi_l(s) \triangleq \pt(\pz,s)-s -\frac{F(s)}{f(s)}
\,.
\end{align}
\end{theorem}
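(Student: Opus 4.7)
The plan is to mirror Myerson's classical derivation of expected revenue as expected virtual welfare, but to carry it out separately on the two regions $\hat p \ge \pz$ (where $x(\hat p) \ge 0$) and $\hat p \le \pz$ (where $x(\hat p) \le 0$), since the payment integral in \Cref{eq:ic_amm_payment} is anchored at $\pz$ rather than at the bottom of the support, and the profit objective \eqref{eq:revenue} contains the extra term $-\pt(\pz,\hat p)\cdot x(\hat p)$ that must be folded into the virtual value.

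First I would write
\[
\E_{\ph\sim\calD}[\text{Profit}] \;=\; \int_{\pmin}^{\pmax} f(\ph)\,y(\ph)\,d\ph \;-\; \int_{\pmin}^{\pmax} f(\ph)\,\pt(\pz,\ph)\,x(\ph)\,d\ph,
\]
split each integral at $\pz$, and substitute the IC payment formula $y(\ph)=\int_{\pz}^{\ph} s\,dx(s)$. For the upper region $\ph\in[\pz,\pmax]$ I would apply Fubini to swap the order of integration:
\[
\int_{\pz}^{\pmax} f(\ph)\!\int_{\pz}^{\ph} s\,dx(s)\,d\ph \;=\; \int_{\pz}^{\pmax} s\,(1-F(s))\,dx(s),
\]
then integrate by parts in Riemann--Stieltjes form. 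The boundary terms vanish because $1-F(\pmax)=0$ and $x(\pz)=g(\pz)-g(\pz)=0$, leaving $\int_{\pz}^{\pmax} f(s)\,x(s)\bigl(s-\tfrac{1-F(s)}{f(s)}\bigr)\,ds$. Subtracting the corresponding piece of the $\pt$-term combines the bracket into exactly $\phi_u(s)$, and since $x(s)\ge 0$ on this region we can freely write $x(s)=|x(s)|$.

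For the lower region $\ph\in[\pmin,\pz]$ the same program applies but with care about signs: the payment $y(\ph)=-\int_{\ph}^{\pz} s\,dx(s)$ is non-positive, and after swapping the order of integration one obtains $-\int_{\pmin}^{\pz} s\,F(s)\,dx(s)$. Integration by parts uses that $F(\pmin)=0$ and again $x(\pz)=0$, yielding $\int_{\pmin}^{\pz} f(s)\,x(s)\bigl(s+\tfrac{F(s)}{f(s)}\bigr)\,ds$. Combining with $-\int_{\pmin}^{\pz} f(s)\,\pt(\pz,s)\,x(s)\,ds$ produces $\int_{\pmin}^{\pz} f(s)\,x(s)\bigl(s+\tfrac{F(s)}{f(s)}-\pt(\pz,s)\bigr)\,ds$, and since $x(s)\le 0$ here we can pull out a minus sign and write this as $\int_{\pmin}^{\pz} f(s)\,|x(s)|\,\phi_l(s)\,ds$ with $\phi_l$ as defined in \eqref{eq:lower_virtual_value}. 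Summing the two regional contributions gives the claimed formula.

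The main obstacle I anticipate is bookkeeping rather than conceptual: keeping signs straight across the two regions (the absolute value on $x$ in the theorem, the orientation of the Riemann--Stieltjes integral when $\ph<\pz$, and the direction of the asymmetry between $\phi_u$ and $\phi_l$) and justifying the integration by parts when $g$ (hence $x$) is merely non-increasing and possibly discontinuous rather than smooth. The latter is handled by working with Riemann--Stieltjes integrals throughout, where integration by parts holds under the mild regularity on $F$ assumed in \Cref{subsec:model_optimal_alloc}; jumps in $x$ at interior points contribute matching terms on both sides of the parts formula and cancel. No use is made of the Bayesian structure of $\pt$ beyond its appearance as a subtracted term, consistent with \Cref{ass:bayesian} entering only in subsequent arguments about the optimal allocation.
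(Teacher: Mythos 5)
Your proposal is correct and follows essentially the same route as the paper's proof: split the expectation at $\pz$, swap the order of integration for the payment term, integrate by parts (with the boundary terms vanishing via $x(\pz)=0$, $1-F(\pmax)=0$, $F(\pmin)=0$), and use the sign of $x$ on each region to produce $|x|$ together with $\phi_u$ and $\phi_l$. The only cosmetic difference is that you keep the $-\pt(\pz,\ph)\,x(\ph)$ term as a one-dimensional integral and fold it in at the end, whereas the paper carries it inside the double integral through the Fubini and integration-by-parts steps; the computations are otherwise identical.
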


\begin{proof}
By \Cref{eq:ic_amm_payment} and re-writing $x(\ph)$ using the normalization condition, we have that\footnote{This proof assumes differentiability of $x(s)$ for simplicity of the technical exposition but carries through verbatim in terms of Riemann–Stieltjes integrals where $x'(s)~ds = dx(s)$, which always exist because $x(s)$ is guaranteed to be monotonic by \Cref{prop:ic_amms_non_incr_g}.}
\[
y(\ph) = \int_\pz^\ph sx'(s)~ds
\text{ and }
x(\ph) = \int_\pz^\ph x'(s)~ds
\,,
\]
so that the expected profit $\E_{\ph\sim\calD}[\text{Profit}]$ is
\begin{align*}
&\int_\pmin^\pmax \int_\pz^\ph \left( s-\pt(\pz,\ph) \right) x'(s) f(\ph) ~ds~d\ph
\\ =&
\int_\pz^\pmax \int_\pz^\ph \left( s-\pt(\pz,\ph) \right) x'(s) f(\ph) ~ds~d\ph
-
\int_\pmin^\pz \int_\ph^\pz \left( s-\pt(\pz,\ph) \right) x'(s) f(\ph) ~ds~d\ph
\\ =&
\int_\pz^\pmax \int_s^\pmax \left( s-\pt(\pz,\ph) \right) x'(s) f(\ph) ~d\ph~ds
-
\int_\pmin^\pz \int_\pmin^s \left( s-\pt(\pz,\ph) \right) x'(s) f(\ph) ~d\ph~ds
\\ =&
\int_\pz^\pmax x'(s) \left( s(1-F(s)) - \int_s^\pmax \pt(\pz,\ph) f(\ph)~d\ph \right) ~ds
\\ &-
\int_\pmin^\pz x'(s) \left( sF(s) - \int_\pmin^s\pt(\pz,\ph) f(\ph)~d\ph \right) ~ds
\\ =&
-
\int_\pz^\pmax x(s) \left( 1-F(s) + (\pt(\pz,s)-s) f(s) \right) ~ds
+
\int_\pmin^\pz x(s) \left( F(s) + (s-\pt(\pz,s)) f(s) \right) ~ds
\\ =&
\int_\pz^\pmax |x(s)| \left( s-\pt(\pz,s) - \frac{1-F(s)}{f(s)} \right) f(s)~ds
+
\int_\pmin^\pz |x(s)| \left( \pt(\pz,s)-s -\frac{F(s)}{f(s)} \right) f(s)~ds
\,,
\end{align*}
where the third line arises from switching the order of the integrals (and the correct switching depends on the relative order of $\pz$ and $\ph$ which is why the second line is needed; the first integral when $\ph\ge\pz$ and the second vice-versa), the fifth line from integration by parts, and the sixth line because $x(s) \le 0$ for $s \le \pz$ and $x(s) \ge 0$ for $s \ge \pz$.

Therefore, the two virtual value functions are as in the theorem description.
\end{proof}

\subsection{Derivation of the optimal allocation rule and the no-trade gap}
\label{subsec:deriv_general_opt}

We now turn our attention to optimizing the expected profit, i.e., finding the allocation rule (equivalently, the demand curve up to an arbitrary positive additive constant) which maximizes the expected profit.

\begin{theorem}
\label{thm:opt_alloc_rule}
The optimal allocation rule $x^\star(\ph)$ that maximizes the expected profit according to \Cref{eq:revenue} under the assumptions of \Cref{subsec:model_optimal_alloc} has the form
\begin{align*}
x^\star(\ph) = \begin{cases}
1,  &\text{for } p_1 \le \ph \le \pmax \\
0,  &\text{for } p_2 < \ph < p_1 \\
-1, &\text{for } \pmin \le \ph \le p_2
\end{cases}
\,,
\end{align*}
where $p_1 \ge \pz$ and $p_2 \le \pz$ are some roots of the upper and lower virtual value functions $\phi_u(s), \phi_l(s)$ of \Cref{eq:upper_virtual_value,eq:lower_virtual_value} respectively.
\end{theorem}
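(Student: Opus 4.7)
My plan is to maximize the virtual welfare expression from \Cref{thm:exp_virtual_welfare} directly, exploiting the $[-1,1]$ pointwise bound together with the monotonicity constraint inherited from incentive compatibility. The first observation is that any IC allocation rule must be non-decreasing (by \Cref{prop:ic_amms_non_incr_g}) and must satisfy $x(\pz)=g(\pz)-g(\pz)=0$, so $x(s)\le 0$ on $[\pmin,\pz]$ and $x(s)\ge 0$ on $[\pz,\pmax]$; consequently the absolute values in \Cref{thm:exp_virtual_welfare} resolve sign-wise, and the objective splits as
\begin{equation*}
J(x)=\int_\pz^\pmax x(s)\phi_u(s) f(s)\,ds \;-\; \int_\pmin^\pz x(s)\phi_l(s) f(s)\,ds.
\end{equation*}
The two integrals involve disjoint pieces of $x$ that interact only through the shared boundary value $x(\pz)=0$ and through global monotonicity, so I would optimize them independently on $[\pz,\pmax]$ and on $[\pmin,\pz]$.

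For the upper piece, I would use the representation $x(s)=\int_\pz^s dx(t)$ (a non-negative Stieltjes measure with total mass $x(\pmax)\le 1$) and swap the order of integration by Fubini to rewrite the upper contribution as $\int_\pz^\pmax \Phi_u(t)\,dx(t)$, where $\Phi_u(t):=\int_t^\pmax \phi_u(s) f(s)\,ds$. Maximizing a linear functional of a non-negative measure of total mass at most $1$ is trivial: if $\max_t \Phi_u(t)>0$, place a unit point mass at the maximizer $p_1$; otherwise place zero mass and set $x\equiv 0$ on $[\pz,\pmax]$. This immediately forces $x^\star$ to be a $\{0,1\}$-valued step function jumping at $p_1$, which is automatically monotone. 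An interior optimum satisfies the first-order condition $\Phi_u'(p_1)=-\phi_u(p_1)f(p_1)=0$, and since $f>0$ on the support, $p_1$ is a root of $\phi_u$. A symmetric argument on $[\pmin,\pz]$, working with $-x$ which is non-negative and non-increasing, produces the mirror step from $-1$ to $0$ at a root $p_2$ of $\phi_l$. Gluing the two halves at $\pz$ gives exactly the three-piece rule claimed in the theorem.

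The main obstacle, and the reason the theorem statement says ``some roots'' rather than identifying $p_1,p_2$ uniquely, is the possibility of multiple zero crossings when the virtual value functions are non-monotone (the ``non-regular'' case in Myerson's theory). In that case the first-order condition only identifies candidate roots and one must select the one that globally maximizes $\Phi_u$ (respectively the analogous lower potential). Fortunately, a global maximizer exists by continuity of $\Phi_u$ on the compact interval $[\pz,\pmax]$, so the argument goes through without an explicit ironing step on the virtual values themselves; this is a simplification afforded by the $\pm 1$ allocation bound, which already constrains the optimal $x^\star$ to be bang-bang regardless of any non-monotonicities in $\phi_u$ or $\phi_l$. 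Finally, I would verify degenerate cases (the optimum placing no mass, equivalent to $p_1=\pmax$ or $p_2=\pmin$) are consistent with the stated form by convention.
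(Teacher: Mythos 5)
Your proposal is correct, and it reaches the paper's conclusion by a genuinely different (and in places sharper) route. The paper, after invoking the virtual-welfare equivalence of \Cref{thm:exp_virtual_welfare} and splitting the objective at $\pz$ exactly as you do, argues directly on the allocation function with an exchange/contradiction argument: the optimal rule cannot switch values between consecutive roots of $\phi_u$ (resp.\ $\phi_l$) and must saturate the unit-demand bound where it trades, so it is bang-bang with thresholds at whichever roots yield the largest expected profit; the sign facts $\phi_u(\pz)\le 0$, $\phi_u(\pmax)\ge 0$, $\phi_l(\pmin)\ge 0$, $\phi_l(\pz)\le 0$ guarantee such roots exist. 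You instead pass to the Stieltjes measure $dx$, apply Fubini to rewrite each half as a linear functional $\int \Phi_u(t)\,dx(t)$ with $\Phi_u(t)=\int_t^\pmax \phi_u(s)f(s)\,ds$ over nonnegative measures of mass at most $1$, and solve that problem trivially by a point mass at the argmax of $\Phi_u$, with the first-order condition $\Phi_u'(p_1)=-\phi_u(p_1)f(p_1)=0$ identifying $p_1$ as a root of $\phi_u$ (and symmetrically for $p_2$). What your route buys is a precise answer to the ``which root'' question in the non-regular case --- the global maximizer of the integrated virtual value --- where the paper only says ``the root that gives the maximum expected profit,'' and it makes transparent why no ironing is needed (the $\pm 1$ bound forces the bang-bang structure regardless of non-monotonicity); what the paper's exchange argument buys is a more elementary presentation closer to the standard Myerson exposition, without the measure-theoretic setup. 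Both arguments share the same implicit regularity caveats (continuity of $s\mapsto\pt(\pz,s)$ so that the virtual values are continuous and interior maximizers are critical points, and the degenerate endpoint cases $p_1=\pmax$ or $p_2=\pmin$), so your treatment is at the same level of rigor as the paper's.
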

\begin{proof}
We will use the equivalence of expected profit and expected virtual welfare established by \Cref{thm:exp_virtual_welfare}.
Because we operate under the constraints that $x(\pz)=0$, $x(\ph) \ge 0$ for $\ph \ge \pz$, and $x(\ph) \le 0$ for $\ph \le \pz$, we can just equivalently optimize the two integrals arising from the right hand side of \Cref{thm:exp_virtual_welfare} separately to obtain the optimal allocation rule.
Additionally, since $\phi_u(\pz) = -\frac{1-F(\pz)}{f(\pz)} \le 0$, $\phi_u(\pmax) = \pmax-\pt(\pz,\pmax) \ge 0$, $\phi_l(\pmin) = \pt(\pz,\pmin)-\pmin \ge 0$, and $\phi_l(\pz) = -\frac{F(\pz)}{f(\pz)} \le 0$, there always exists a non-trivial optimal solution to the optimization problem.

If $\phi_u(s)$ was monotone (increasing), then by the above sign changes, there would exist a (unique) root $p_1$ such that $\phi_u(p_1)=0$.
A simple argument by contradiction then shows that the structure of the optimal allocation rule is to allocate the maximum allowable quantity of asset (by the constraint of unit demand, $x(s)\le 1$) to be sold from the price $p_1$ onwards; that is,
\begin{align}
\label{eq:upper_optimal_x}
x^\star(s) = \begin{cases}
1, &\text{for } p_1 \le s \le \pmax \\
0, &\text{for } \pz \le s < p_1
\end{cases}
\,.
\end{align}

For the lower virtual value function, a similar argument shows that, if $\phi_l(s)$ was monotone (decreasing), and $p_2$ is its unique root (which is guaranteed to exist by the aforementioned sign changes), then the optimal allocation rule obeying the constraint of unit demand $x(s)\ge -1$ would be
\begin{align}
\label{eq:lower_optimal_x}
x^\star(s) = \begin{cases}
0,  &\text{for } p_2 < s \le \pz \\
-1, &\text{for } \pmin \le s \le p_2
\end{cases}
\,.
\end{align}

In full generality, if the virtual value functions are not monotone, then there may now be multiple roots of those functions. In this case, and with an argument by contradiction that if the allocation rule were to switch in between roots then we could obtain better or equal expected profit by modifying the allocation rule to remain constant in between roots, along with the argument that if the allocation rule were not to allocate the entire available demand (1, $-1$, respectively) then we could obtain better or equal expected profit by increasing (or decreasing, respectively) the allocated quantity of the asset until the limit (of 1 or $-1$, respectively), we obtain that the optimal allocation rule is exactly characterized by \Cref{eq:lower_optimal_x,eq:upper_optimal_x}, where this time the choice of the roots as $p_1$ (and $p_2$, respectively) needs to be the root of the upper (lower, respectively) virtual value function that gives the maximum expected profit under the allocation rule that has the form of \Cref{eq:upper_optimal_x} (\Cref{eq:lower_optimal_x}, respectively). Therefore, the optimal allocation rule is as per the theorem statement.
\end{proof}

Therefore, in any case, we observe that the optimal solution of \Cref{eq:upper_optimal_x,eq:lower_optimal_x} has a well-specified \textbf{no-trade gap} around $\pz$ where the market maker is not willing to buy or sell any quantity of the asset to the trader.
More specifically, the length of this no-trade gap is precisely $p_1-p_2$, where $p_1\ge\pz$, $p_2\le\pz$ are as per the above analysis roots of the upper and lower virtual value functions respectively.

We can now specialize in the two interesting extreme cases that were examined in \Cref{subsec:intro_specific} to obtain the respective observations there: first, consider the pure noise trading case, where the traders are assumed to offer no information on the true asset price, and hence the market maker's prior is static, i.e., in the case that the update rule is $\pt(\pz,\ph)=\pz,\ \forall\ph\in[\pmin,\pmax]$. In this case, we see that there is a \textit{non-trivial} no-trade gap, since we have for the extremal values that $\phi_u(\pz) < 0$, $\phi_u(\pmax) > 0$, $\phi_l(\pmin) > 0$, and $\phi_l(\pz) < 0$ (see the beginning of \Cref{subsec:deriv_general_opt} for the expressions), hence the respective roots are contained strictly in between their corresponding intervals (by continuity of the virtual value functions).
This no-trade gap is akin to a bid-ask spread observed in traditional CLOB-based markets, and is attributable to the monopoly pricing power of the market maker.
In addition, it is \textit{distribution-dependent} and its length can be intuitively understood as being smaller under distributions $\calD$, according to which the traders' estimates $\ph$ are distributed, that are more ``concentrated'' around $\pz$\footnote{With this statement, we intend to only give intuition around the length of the interval, because there can be pathological cases of pairs of distributions that do not follow this generic intuition, and hence it need not be true in full generality.}. To mention an example that illustrates this intuition, the no-trade gap under pure noise trading and a uniform distribution supported on $[p_\text{min}^{(1)}, p_\text{max}^{(1)}]$ will be smaller than the one under another uniform distribution supported on $[p_\text{min}^{(2)}, p_\text{max}^{(2)}]$, where $p_\text{min}^{(1)} > p_\text{min}^{(2)} \text{ and } p_\text{max}^{(1)} < p_\text{max}^{(2)}$.

At the other extreme, consider the case where the traders always have perfect information about the true asset price, i.e., the market maker's update rule is $\pt(\pz,\ph) = \ph,\ \forall\ph\in[\pmin,\pmax]$. In this case, we observe that the two virtual value functions become
\[
\phi_u(s) = -\frac{1-F(s)}{f(s)} \le 0
\text{ and }
\phi_l(s) = -\frac{F(s)}{f(s)} \le 0
\,,
\]
thus the optimal allocation rule is everywhere zero: $x^\star(\ph) = 0,\ \forall \ph\in[\pmin,\pmax]$.
It is therefore the case that this corollary which may be phrased as a version of a \textit{no-trade theorem} in the presence of perfect adverse selection is \textit{distribution-independent}: that is, regardless of the distribution $\calD$ of the traders' asset price estimates $\ph\sim\calD$, if the market maker believes that the traders have perfect information, then their optimal allocation rule is to perform no-trade at all. More specifically:

\begin{corollary}
[No-trade theorem when all traders have perfect information]
If the update rule $\pt(\pz,\ph) = \ph,\ \forall\ph\in[\pmin,\pmax]$, then for every distribution $\calD$, the optimal allocation rule is $x^\star(\ph) = 0,\ \forall\ph\in[\pmin,\pmax]$.
\end{corollary}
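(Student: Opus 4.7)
The plan is to invoke Theorem \ref{thm:exp_virtual_welfare} directly and substitute the perfect-information update rule $\pt(\pz,\ph)=\ph$ into the virtual value functions. This substitution immediately yields
\[
\phi_u(s) = s - \frac{1-F(s)}{f(s)} - s = -\frac{1-F(s)}{f(s)}, \qquad \phi_l(s) = s - s - \frac{F(s)}{f(s)} = -\frac{F(s)}{f(s)}.
\]
Since $f(s) > 0$ on $[\pmin,\pmax]$ and $F(s),1-F(s) \ge 0$, both virtual value functions are non-positive everywhere on their respective domains.

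Next, I would invoke the sign constraints on the allocation rule that come from IC (Proposition \ref{prop:ic_amms_non_incr_g}): $x(\ph) \ge 0$ for $\ph \ge \pz$ and $x(\ph) \le 0$ for $\ph \le \pz$. In particular $|x(\ph)| \ge 0$ pointwise. By Theorem \ref{thm:exp_virtual_welfare}, the expected profit equals
\[
\E_{\ph\sim\calD}\!\left[|x(\ph)| \cdot \left(\phi_u(\ph)\bone_{\ph\ge\pz} + \phi_l(\ph)\bone_{\ph\le\pz}\right)\right],
\]
and the integrand is non-positive pointwise since $|x(\ph)|\ge 0$ and each $\phi$ is $\le 0$ on its indicator's support. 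Hence the expected profit is bounded above by $0$.

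Finally, I would observe that the trivial allocation $x(\ph)\equiv 0$ is both feasible (it is non-decreasing, passes through $x(\pz)=0$, and satisfies $|x|\le 1$) and attains the upper bound of $0$ exactly, so it is optimal. The only subtlety worth remarking on is pinning down optimality at the boundary points $\pmin$ and $\pmax$, where $\phi_l(\pmin)=\phi_u(\pmax)=0$ and so other allocations can achieve the same expected profit; however, this is a measure-zero set under the assumed continuous density and does not contradict the stated characterization. No genuine obstacle arises, as the proof reduces to a one-line substitution into Theorem \ref{thm:exp_virtual_welfare} followed by a pointwise sign comparison; the entire content of the corollary is that perfect adverse selection causes both virtual value functions to collapse to their purely negative ``information rent'' term, eliminating any profit opportunity.
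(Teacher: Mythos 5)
Your proposal is correct and follows essentially the same route as the paper: substituting $\pt(\pz,\ph)=\ph$ collapses both virtual value functions to $-\frac{1-F(s)}{f(s)}$ and $-\frac{F(s)}{f(s)}$, which are non-positive, so by \Cref{thm:exp_virtual_welfare} the expected profit of any feasible allocation is at most zero and $x^\star\equiv 0$ attains it. Your extra remark about the endpoints $\pmin,\pmax$ (where the virtual values vanish) is a harmless refinement the paper does not bother to make.
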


\section{An important special case: linear instantiation of the update rule}
\label{sec:lin_lambda}

We move on to consider an interesting special case of the update rule of \Cref{subsec:model_optimal_alloc}: the simplest specialization that allows us to capture an arbitrary mixture of both noise trading and adverse selection, and enables us to show exactly how they intermingle with one another.
More specifically, we consider the case when the update rule is a convex combination of the two above extreme cases, controlled by a parameter $\lambda\in[0,1]$:
\begin{align}
\label{eq:lin_upd_rule}
\pt(\pz,\ph) = \lambda \pz + (1-\lambda) \ph
\end{align}

The extreme cases arise as follows: for $\lambda=1$, we obtain that the update remains the same as the market maker's prior ($\pz$), i.e., the pure noise trading model, where the trader is assumed to provide no information on the asset price.
For $\lambda=0$, we obtain that the update always follows exactly the trader's estimate ($\ph$), thus the trader is considered as having perfect information around the asset price, i.e., pure adverse selection is attained.

There are standard normality assumptions under which this linear update rule naturally emerges as the update rule for Bayesian updating. For example, if we consider that the asset value $V\sim\calN(p_0, \sigma_0^2)$ where $\calN$ is the normal distribution, and the trader's estimate is $\ph = V + \eps$, where $\eps\sim\calN(0,\sigma_\eps^2)$, i.e., $\ph\sim\calN(p_0,\sigma_0^2+\sigma_\eps^2)$, then the posterior according to a Bayesian update step would also be a normal distribution with mean $\pt(\pz,\ph) = \lambda \pz + (1-\lambda) \ph$, where $\lambda=\frac{\sigma_\eps^2}{\sigma_0^2+\sigma_\eps^2}$ \parencite{bayesian}.
Of course, this is just an illustration of the motivation behind the linear update rule; it ignores the fact that asset prices cannot obtain negative values, but a similar calculation can be performed for log-normal distributions, for instance, and also holds approximately for truncated Gaussian distributions in the case of the standard deviation being small compared to the mean value.
However, in this section, we will not be necessarily making any such Gaussian assumptions; we will only be assuming that the conducted update corresponds to a linear interpolation (in fact, convex combination) of the two cases of $\pz$ and $\ph$ that we have seen in \Cref{subsec:intro_specific} as indicative of noise trading and adverse selection.

In order to proceed with a more precise characterization of the interval of no-trade around $\pz$, we will make a similar assumption to the regularity of the distribution $\calD$ made by \textcite{myerson_optimal_auctions}:
\begin{assumption}
\label{ass:reg}
The upper and lower virtual value functions, obtained by plugging in \Cref{eq:lin_upd_rule} to \Cref{eq:upper_virtual_value,eq:lower_virtual_value},
\[
\phi_u(s;\lambda) = \lambda(s-\pz) - \frac{1-F(s)}{f(s)}
\text{ and }
\phi_l(s;\lambda) = \lambda(\pz-s) - \frac{F(s)}{f(s)}
\]
are non-decreasing and non-increasing (respectively) with respect to $s$ for all $\lambda\in[0,1]$.
\end{assumption}

\paragraph{Discussion of \Cref{ass:reg}.}
\Cref{ass:reg} can be intuitively understood as stating that the distribution of $\ph$ (i.e., $\calD$) does not have heavy tails away from $\pz$ on either direction (i.e., left or right tails). We now explain in a detailed way why this is the case and why this assumption is standard and naturally motivated.
First, observe that the original regular distributional assumption of \textcite{myerson_optimal_auctions} is equivalent to the assumption that $\phi_u(s;\lambda)$ is strictly increasing for $\lambda=1$.
Additionally, notice that an equivalent re-statement of \Cref{ass:reg} is that the respective monotonicities hold for $\lambda=0$, since this implies the monotonicities for all $\lambda\in[0,1]$.
In particular, for the monotonicity of the upper virtual value function, we observe that it exactly corresponds to having a monotone hazard rate, where the hazard rate of a distribution is defined as
\[
h(s) \triangleq \frac{f(s)}{1-F(s)}
\,.
\]
Distributions that have a monotone hazard rate (MHR) have been widely used before as common distributional assumptions in revenue maximizing settings \parencite{hartlinebook,coleshravas2017,cole2014sample,zora63129}. More generally, $\phi_u(s;\lambda)$ being non-decreasing for a specific $\lambda$ corresponds to the notion of an $\alpha$-strongly regular distribution (Definition 1.1) of \textcite{cole2014sample} where $\alpha=1-\lambda$.
As for the monotonicity of the lower virtual value function assumption, we observe that if we make the change of variables $p' = \pz - \ph$ (where $\ph\sim\calD$ is our original random variable, and $p'$ is the new one), then $\phi_l(s;0)$ being non-increasing would be equivalent to the distribution of $p'$ having a monotone hazard rate (MHR), because if we denote by $G(s)$ and $g(s)$ the CDF and the PDF of $p'$ respectively, then from standard probabilities, we get that $G(s) = 1-F(\pz-s)$, and $g(s) = f(\pz-s)$, hence
\[
\frac{g(s)}{1-G(s)} = \frac{f(\pz-s)}{F(\pz-s)}
\,,
\]
and $\phi_l(s;0)=-\frac{F(s)}{f(s)}$ being non-increasing is equivalent to $\frac{f(\pz-s)}{F(\pz-s)}$ being non-decreasing.

Intuitively, distributions satisfying \Cref{ass:reg} are well-behaved, in the sense that they correspond to distributions with MHR on both of their tails, i.e., distributions that have both left and right tails that are no heavier than an exponential.

Many interesting distributions satisfy the ``non-heavy-tailed'' property that \Cref{ass:reg} refers to. For example, all uniform and exponential distributions satisfy the assumption. More generally, all distributions with log-concave densities have a monotone hazard rate \parencite{zora63129}, and if centered around $\pz$ (for instance symmetrically on both sides), both tail ends would not be heavy and they would satisfy the assumption.
As with classical non-MHR distributions, \Cref{ass:reg} is not satisfied by most heavy-tailed distributions; the difference is that this assumption includes both left and right tails of the distribution.
A good example of a double-sided (left and right) heavy-tailed distribution that would fail to satisfy \Cref{ass:reg} would be the Cauchy distribution with a location parameter of $\pz$, for which the moment generating function does not exist anywhere: in this case, both of its left and right tails around $\pz$ are heavy \parencite{cauchy}.

We now move on to characterize the length of the no-trade interval present in the optimal solution of \Cref{eq:upper_optimal_x,eq:lower_optimal_x} as $\lambda$ varies.

\begin{proposition}
\label{prop:lin_length_no_trade}
The length of the no-trade gap in the optimal solution that obtains the maximum expected profit under \Cref{eq:lin_upd_rule} is a decreasing function of $\lambda$, when the distribution $\calD$ satisfies \Cref{ass:reg}.
\end{proposition}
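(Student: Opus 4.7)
The plan is to show that the two threshold prices of \Cref{thm:opt_alloc_rule} satisfy $p_1(\lambda)$ non-increasing and $p_2(\lambda)$ non-decreasing in $\lambda$; the claim about the no-trade gap length $p_1(\lambda) - p_2(\lambda)$ then follows by subtraction. The key structural observation is that, under the linear update rule of \Cref{eq:lin_upd_rule}, the two virtual value functions
\[
\phi_u(s;\lambda) = \lambda(s-\pz) - \frac{1-F(s)}{f(s)}, \qquad \phi_l(s;\lambda) = \lambda(\pz-s) - \frac{F(s)}{f(s)}
\]
depend on $\lambda$ in a monotone way on the regions that matter for the thresholds: $\phi_u(s;\lambda)$ is non-decreasing in $\lambda$ for $s \ge \pz$ (since $\partial \phi_u / \partial \lambda = s - \pz \ge 0$), and $\phi_l(s;\lambda)$ is non-decreasing in $\lambda$ for $s \le \pz$ (since $\partial \phi_l / \partial \lambda = \pz - s \ge 0$). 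So raising $\lambda$ shifts both virtual value functions upward pointwise on precisely the intervals that determine $p_1(\lambda)$ and $p_2(\lambda)$.

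First I would use the endpoint sign evaluations already recorded in \Cref{subsec:deriv_general_opt} --- namely $\phi_u(\pz;\lambda) \le 0 \le \phi_u(\pmax;\lambda)$ and $\phi_l(\pmin;\lambda) \ge 0 \ge \phi_l(\pz;\lambda)$ --- together with continuity of $\phi_u$ and $\phi_l$ and the $s$-monotonicities from \Cref{ass:reg}, to confirm that well-defined zero-crossings $p_1(\lambda) \in [\pz, \pmax]$ and $p_2(\lambda) \in [\pmin, \pz]$ exist for every $\lambda \in [0,1]$. Then, for any $\lambda' > \lambda$, pointwise monotonicity in $\lambda$ gives $\phi_u(p_1(\lambda); \lambda') \ge \phi_u(p_1(\lambda); \lambda) = 0$, and since $\phi_u(\cdot; \lambda')$ is non-decreasing in $s$ on $[\pz, \pmax]$, its zero-crossing must lie weakly to the left of $p_1(\lambda)$, giving $p_1(\lambda') \le p_1(\lambda)$. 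The symmetric argument, using that $\phi_l(\cdot; \lambda')$ is non-increasing in $s$ on $[\pmin, \pz]$, yields $p_2(\lambda') \ge p_2(\lambda)$. Subtracting delivers $p_1(\lambda') - p_2(\lambda') \le p_1(\lambda) - p_2(\lambda)$, which is the proposition.

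The main obstacle is handling non-uniqueness of the roots when a virtual value function is identically zero on a sub-interval --- a possibility compatible with the non-strict phrasing of \Cref{ass:reg} and with the language ``some roots'' appearing in \Cref{thm:opt_alloc_rule}. The cleanest resolution is to fix a selection convention that is applied uniformly in $\lambda$; for instance, take $p_1(\lambda) = \inf\{s \in [\pz,\pmax] : \phi_u(s;\lambda) = 0\}$ and $p_2(\lambda) = \sup\{s \in [\pmin,\pz] : \phi_l(s;\lambda) = 0\}$, so that the comparative-statics argument is applied to well-defined single-valued functions of $\lambda$. Because any zero in the respective sets yields the same optimal expected profit by \Cref{thm:opt_alloc_rule}, this choice is without loss for the proposition's interpretation, and once it is fixed the argument above produces a non-increasing (equivalently, weakly ``decreasing'') gap length.
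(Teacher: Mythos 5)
Your argument is correct and is essentially the paper's own proof: both rest on the identity $\phi_u(s;\lambda_2)-\phi_u(s;\lambda_1)=(\lambda_2-\lambda_1)(s-\pz)$ (and its analogue for $\phi_l$) combined with the $s$-monotonicity from \Cref{ass:reg} to show the upper root moves down and the lower root moves up as $\lambda$ increases --- the paper just evaluates the smaller-$\lambda$ virtual value at the larger-$\lambda$ root rather than the other way around, which is the same comparative-statics step. Your extra care about existence of the roots and a selection convention when roots are non-unique is a harmless refinement of what the paper states as ``exactly one root.''
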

\begin{proof}
First, observe that under \Cref{ass:reg}, the optimal allocation rule is exactly characterized by \Cref{eq:upper_optimal_x,eq:lower_optimal_x}, whereby there exists exactly one root for each of the corresponding upper and lower virtual value functions.
We now need to show how the roots move, while $\lambda$ varies. In particular, we will prove that the root of the upper virtual value function increases as $\lambda$ decreases; the proof that the root of the lower virtual value function decreases is similar. These two combined facts mean that, as $\lambda$ decreases, the length of the interval of no-trade increases (because each of the roots moves away from $\pz$), thus proving the desired result.

We finish the proof as follows: consider $\lambda_1 \le \lambda_2$. The roots $p_1, p_2$ of the upper virtual value functions $\phi_u(s;\lambda_1), \phi_u(s;\lambda_2)$ for $\lambda_1, \lambda_2$ respectively satisfy
\[
\phi_u(p_1;\lambda_1) = 0
\text{ and }
\phi_u(p_2;\lambda_2) = (\lambda_2-\lambda_1)(p_2-\pz) + \phi_u(p_2;\lambda_1) = 0
\,.
\]

Additionally, $(\lambda_2-\lambda_1)(p_2-\pz) \ge 0$ since $\lambda_1 \le \lambda_2$ and $p_2 \ge \pz$ (by definition of the upper virtual value function), so the latter equality means that $\phi_u(p_2;\lambda_1) \le 0 = \phi_u(p_1;\lambda_1)$.
Therefore, by the monotonicity of $\phi_u(s;\lambda_1)$, the latter is equivalent to the ordering $p_2 \le p_1$ of the respective roots, as desired.
\end{proof}

\Cref{prop:lin_length_no_trade} allows us to \emph{make precise the important interpretation of how the effects of monopoly pricing and adverse selection interplay} in a model of trading that incorporates both noise trading and better-informed traders.
More specifically, in the optimal solution to the profit maximization problem of the market maker, there is a minimal interval of no-trade (corresponding to the pure noise trading case, $\lambda=1$) that exhibits the monopoly pricing power being the sole driving force of the shape of the optimal mechanism.
As $\lambda$ gets smaller, and the market maker's model for the trader pre-supposes more and more traders with perfect information, according to \Cref{prop:lin_length_no_trade}, the interval around $\pz$ on which the AMM is not willing to trade grows larger and larger.
This change is attributed to adverse selection being added in the mixture because of the traders that are better informed than the market maker.
At the other extreme, where pure adverse selection is observed ($\lambda=0$), the optimal solution is to provide no liquidity, or equivalently, perform no trade. This is a distribution-independent result that is due to the corresponding virtual value functions (both upper and lower) being purely non-positive in that case (c.f., the expressions of \Cref{ass:reg} for $\phi_u(s;0)$ and $\phi_l(s;0)$ where we substitute $\lambda=0$), making the optimal solution that obtains the maximum profit be exactly the no-trade-anywhere solution $x(\ph) = 0,\ \forall \pmin\le\ph\le\pmax$.

\section{Conclusion and future directions}

The goal of this work is to isolate the most essential and salient properties of the problem of optimal liquidity provision: the zero-sum game between traders and liquidity providers, uncertain valuations, asymmetric information, and price updates conditioned on trade.
We show how a ``no-trade gap'' (that arises as a ``bid-ask spread'' in traditional market microstructure literature for reasons other than the ones considered here)  surprisingly arises in very general settings of liquidity provision in exchange mechanisms, like the ones we consider, allowing us to unlock a fruitful connection between optimal market-making and the Myersonian theory of optimal auction design. This connection has not been made in prior literature to the best of our knowledge.

Our general framework for reasoning about AMMs is based on a Bayesian-like belief inference procedure. Such Bayesian updating procedures are well-known and widely used in machine learning, even in the specific context of market-making \parencite{nips2008sequential_marketmaking}.
The generality of our framework has the exact intention of capturing any possible exchange mechanism, including the most prominent examples of both limit order books and automated market makers \parencite[see also][]{exchange_complexity}.
At the same time, for the optimal allocation rule whose structure we hereby characterize, it would be interesting to give the complexity of the market maker's computation with update rules other than the linear one, which we gave as an indicative example of the mixing effects of both adverse selection and noise trading.
In the case of update rules that can arise from an online learning framework with iterative belief updates, what makes the complexity determination not straightforwardly arise from these online learning frameworks is exactly the needed calculation of the roots in \Cref{thm:opt_alloc_rule}.

It might seem, at a first glance, that the market maker in the framework given in this work is not just risk-neutral but also myopic, i.e., maximizes expected profit from the next trade given its update rule, belief distribution over traders’ estimates, and its own current asset value estimate.
However, our framework can capture much more generic structures of dependencies that allow the market maker to not be myopic, including a (possibly discounted) cumulative profit/reward structure.
It would be an interesting future direction to examine whether the insights we give in this work change in such cases; we conjecture that they do not, but additional effects could be observed.
In fact, our Bayesian (update rule) framework can be adjusted to accommodate these more complicated structures, where the updated belief (through the update rule) can have any future expectations already baked in.
In particular, sequentially, the market maker’s new updated belief of the prior round of trading becomes their prior belief in the next trade, so that the prior at each round depends on the entire past history of trades; the update rule at each time holds the best (future) expectation of the market maker.

Finally, future empirical work that probes the model's assumptions and results could be explored, just as empirical work on revenue-maximizing auctions and online learning has served to usefully probe Myerson's original theoretical framework.

\section*{Acknowledgments} 
The first author is supported in part by NSF awards CNS-2212745, CCF-2212233, DMS-2134059, and CCF-1763970.
The second author is supported by the Briger Family Digital Finance Lab at Columbia Business School.
The third author's research at Columbia University is supported in part by NSF awards CNS-2212745, and CCF-2006737.

\section*{Disclosures}
The second author is an advisor to fintech companies.
The third author is Head of Research at a16z crypto, which reviewed a draft of this article for compliance prior to publication and is an investor in various decentralized finance projects, including Uniswap, as well as in the crypto ecosystem more broadly (for general a16z disclosures, see \url{https://www.a16z.com/disclosures/}).

\printbibliography

\appendix

\end{document}